    \let\proof\@undefined
    \let\endproof\@undefined
    \pgfplotsset{compat=1.13}
    \let\originalleft\left
    \let\originalright\right
    \renewcommand{\left}{\mathopen{}\mathclose\bgroup\originalleft}
    \renewcommand{\right}{\aftergroup\egroup\originalright}
    \newcounter{thm} 
    \newtheorem{theorem}[thm]{\indent Theorem}
    \newtheorem{assumption}{\indent Assumption}
    \newtheorem{proposition}{\indent Proposition}
    \newtheorem{lemma}{\indent Lemma}
    \newtheorem{corollary}{\indent Corollary}
    \newtheorem{definition}{\indent Definition}
    \newtheorem{example}{\indent Example}
    \newtheorem{Simulation}{Simulation}
    \newtheorem{fact}{\indent Fact}
    \newtheorem{conjecture}{\indent Conjecture}
    \newtheorem{experiment}{\indent Experiment}
    \renewcommand{\theenumi}{{\it (\alph{enumi})}}
    \renewcommand{\labelenumi}{\theenumi}
    \newlength\figureheight 
    \newlength\figurewidth
    \newcommand\at[2]{\left.#1\right|_{#2}}
    \DeclareMathAlphabet{\mathcal}{OMS}{cmsy}{m}{n} 
    \crefname{equation}{}{}
\begin{document}

\title{Time-Varying Soft-Maximum Control Barrier Functions for \\Safety in an \textit{A Priori} Unknown Environment}

\author{Amirsaeid Safari and Jesse B. Hoagg
\thanks{A. Safari and J. B. Hoagg are with the Department of Mechanical and Aerospace Engineering, University of Kentucky, Lexington, KY, USA. (e-mail: amirsaeid.safari@uky.edu, jesse.hoagg@uky.edu).}
\thanks{This work is supported in part by the National Science Foundation (1849213, 1932105) and Air Force Office of Scientific Research (FA9550-20-1-0028).}
}
\maketitle
\thispagestyle{firststyle}

\begin{abstract}
This paper presents a time-varying soft-maximum composite control barrier function (CBF) that can be used to ensure safety in an \textit{a priori} unknown environment, where local perception information regarding the safe set is periodically obtained.
We consider the scenario where the periodically obtained perception feedback can be used to construct a local CBF that models a local subset of the unknown safe set. 
Then, we use a novel smooth time-varying soft-maximum function to compose the $N$ most recently obtained local CBFs into a single CBF. 
This composite CBF models an approximate union of the $N$ most recently obtained local subsets of the safe set. 
Notably, this composite CBF can have arbitrary relative degree $r$.
Next, this composite CBF is used as a $r$th-order CBF constraint in a real-time optimization to determine a control that minimizes a quadratic cost while guaranteeing that the state stays in a time-varying subset of the unknown safe set. 
We also present an application of the time-varying soft-maximum composite CBF method to a nonholonomic ground robot with nonnegligible inertia. 
In this application, we present a simple approach to generate the local CBFs from the periodically obtained perception data.
\end{abstract}




\section{Introduction}

Safe autonomous robotic navigation in an \textit{a priori} unmapped environment has application in a variety of domains including search and rescue~\cite{hudson2021heterogeneous}, environmental monitoring~\cite{kress2009temporal}, and transportation~\cite{schwarting2018planning}. 
A techniques for safe navigation include potential field methods\cite{tang2010novel,kirven2021autonomous}, collision cones \cite{sunkara2019collision}, reachable sets \cite{chen2018hamilton}, and barrier function approaches~\cite{prajna2007framework,panagou2015distributed,tee2009barrier,jin2018adaptive,ames2014control,ames2016control,ames2019control}.

Control barrier functions (CBFs) provide a set-theoretic method to obtain forward invariance (e.g., safety) with respect to a specified safe set \cite{wieland2007constructive}.
CBFs can be implemented as constraints in real-time optimization-based control methods (e.g., quadratic programs) in order to guarantee forward invariance and thus, safety \cite{ames2019control}. 
A variety of extensions have been developed recently. 
For example, high-relative degree CBF methods (e.g., \cite{hsu2015control, nguyen2016exponential,xiao2021high}); CBFs for discrete time \cite{zeng2021safety}; and CBFs with time variation or adaptation (e.g., \cite{lindemann2018control,taylor2020adaptive}).

Traditionally, CBFs are assumed to be given offline, that is, constructed offline using \textit{a priori} known information regarding the desired safe set.
However, in situations where the environment is unknown or changing, online construction of valid CBFs could enable safe navigation. 
In this scenario, the objective is to construct a valid CBF in real time based on the current state of the system (e.g., robot) as well as information gathered from the environment (e.g., perception information). 
For example, \cite{srinivasan2020synthesis} uses a support vector machine classifier to build a barrier function using safe and unsafe samples from LiDAR measurements. 
As another example, \cite{long2021learning} synthesizes a barrier function using a  deep neural network trained with sensor data. 
However, when new sensor data is obtained, the barrier function model must be updated, which typically results in discontinuities that can be problematic for ensuring forward invariance and thus, safety.

Discontinuities that arise when new data is available is only one challenge in online construction of a valid CBF. 
Another challenge is that it can be difficult to synthesize a single valid CBF that models a complex environment. 
Thus, there is interest in composing a single valid CBF from multiple CBFs. 
Approaches to compose CBFs include \cite{srinivasan2018control,glotfelter2017nonsmooth,glotfelter2019hybrid,rabiee2023softmin,rabiee2023softmax}.
For example, \cite{glotfelter2017nonsmooth} uses Boolean compositions, which result in nonsmooth barrier functions that use the minimum and maximum functions. 
These nonsmooth barrier functions are required to be relative degree one. 
This nonsmooth barrier function approach is extended in \cite{glotfelter2019hybrid} to allow not only for a nonsmooth composition (e.g., minimum and maximum) but also for time variation, specifically, periodic jumps in the barrier function. 
This extension can be useful for addressing the discontinuities that arise when new information (e.g., perception data) is used to update the barrier function. 
However, this approach is not applicable for relative degree greater than one. 
Thus, \cite{srinivasan2018control, glotfelter2017nonsmooth, glotfelter2019hybrid} cannot be directly applied to higher-relative degree situations such as ground robots with nonnegligible inertia where the safe set is based on position (e.g., avoiding obstacles), or unmanned aerial vehicles with position-based safe sets. 
In contrast to the nonsmooth composition methods, \cite{rabiee2023softmin,rabiee2023softmax} use smooth soft-minimum and soft-maximum functions for composing a single barrier function from multiple barrier functions. 
However, \cite{rabiee2023softmin,rabiee2023softmax} does not address online CBF construction, specifically, updating the CBF based on real-time measurements. 

This paper presents several new contributions. 
First, we present a new time-varying soft-maximum composite CBF construction that can have arbitrary relative degree $r$ with respect to the system dynamics and allows for safety in an \textit{a priori} unmapped environment. 
We consider the scenario where perception feedback information is periodically obtained and used to construct a local CBF, that is, a CBF that models a local subset of the unknown safe set. 
Then, we use a soft-maximum function to compose a single CBF from the $N$ most recently obtained local CBFs. 
Notably, this composition uses not only the soft maximum but also time variation, specifically, a homotopy in time to smoothly move the oldest local CBF out of the soft maximum and newest local CBF into the soft maximum. 
In fact, this homotopy is sufficiently smooth to ensure that the composite soft-maximum CBF is $r$-times continuously differentiable in time and in the state. 
Thus, this composite CBF models an approximate union of the $N$ most recently obtained local subsets of the safe set. 
Next, we use this composite CBF as a $r$th-order CBF constraint in a real-time-optimization control that aims to find an instantaneously optimal control while guaranteeing that the state stays in a subset of the unknown safe set.

We also present an application of the time-varying soft-maximum composite CBF method to a nonholonomic ground robot with nonnegligible inertia.
This robot is equipped with sensing capability (e.g., LiDAR) that periodically detects $P$ points on objects (i.e., obstacles) that are near the robot. 
The robot has the objective of safely navigating the unknown environment. 
We present a simple approach to generate local CBFs from the perception data (e.g., LiDAR points). 
Specifically, we construct an ellipsoidal barrier function for each of the $P$ detected point, where the semi-major axis of the ellipsoid stretches from the detected point to the range of the sensor and the $0$-superlevel set models that area outside the ellipsoid. 
Thus, the $0$-sublevel set of each ellipsoid is a region that we want to avoid because it is not \textit{yet} known to be safe. 
We use a soft-minimum function to compose the $P$ ellipsoidal barrier function and thus, model a local set that is known to be a subset of the safe set from the perception data. 
We then use the $N$ most recently obtained composite soft-minimum CBFs (each one models a local subset of the unknown safe set) in the time-varying soft-maximum barrier function method.

\section{Notation}

Let $\zeta:\BBR^n \to \BBR$ be continuously differentiable. The Lie derivative of $\zeta$ along the vector fields of $\nu:\mathbb{R}^n \to \mathbb{R}^{l}$ is define as 
\begin{equation*}
L_\nu \zeta(x) \triangleq \frac{\partial \zeta(x)}{\partial x}\nu(x).
\end{equation*}

Let $\kappa>0$, and consider the functions $\mbox{softmin}_k : \mathbb{R}^N \to \mathbb{R}$ and $\mbox{softmax}_k:\mathbb{R}^N \to \mathbb{R}$ defined by
\begin{gather}
\mbox{softmin}_\kappa (z_1,\cdots,z_N) \triangleq -\frac{1}{\kappa}\log\sum_{i=1}^Ne^{-\kappa z_i},\label{eq:softmin}\\
\mbox{softmax}_\kappa (z_1,\cdots,z_N) \triangleq \frac{1}{\kappa}\log\sum_{i=1}^Ne^{\kappa z_i} - \frac{\log N }{\kappa},\label{eq:softmax}
\end{gather}
which are the soft minimum and soft maximum, respectively.

The next result relates the soft minimum to the minimum and the soft maximum to the maximum.
\begin{fact} \label{fact:softmin_limit}
\rm{
Let $z_1,\cdots, z_N \in \mathbb{R}$. 
Then,
\begin{align*}
  \min \, \{z_1,\cdots,z_N\} - \frac{\log N }{\kappa} 
    &\le \mbox{softmin}_\kappa(z_1,\cdots,z_N)\\
    &< \min \, \{z_1,\cdots,z_N\},
\end{align*}
and
\begin{align*}
    \max\,\{z_1,\cdots,z_N\}  - \frac{\log N}{\kappa}
 &< \mbox{softmax}_\kappa(z_1,\cdots,z_N) \\
 &\le \max \, \{z_1,\cdots,z_N\}.
\end{align*}
}
\end{fact}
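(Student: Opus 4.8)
The plan is to sandwich the exponential sum appearing in each definition between its largest term and $N$ copies of that largest term, and then take logarithms. I expect this to amount to a one-line estimate in each of the two cases, with the only care needed being to track the direction of the inequalities.

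For the soft minimum, set $z_{\min} \triangleq \min\{z_1,\dots,z_N\}$. Since $\kappa > 0$, the map $t \mapsto e^{-\kappa t}$ is strictly decreasing, so $\max_i e^{-\kappa z_i} = e^{-\kappa z_{\min}}$. Each summand $e^{-\kappa z_i}$ is strictly positive, so summing $N \ge 2$ such terms strictly exceeds any single one; hence $e^{-\kappa z_{\min}} < \sum_{i=1}^N e^{-\kappa z_i} \le N e^{-\kappa z_{\min}}$. Applying the (increasing) natural logarithm and then multiplying by $-1/\kappa < 0$, which reverses both inequalities, yields $z_{\min} - (\log N)/\kappa \le \mbox{softmin}_\kappa(z_1,\dots,z_N) < z_{\min}$, which is exactly the claimed two-sided bound.

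For the soft maximum, the argument is symmetric: put $z_{\max} \triangleq \max\{z_1,\dots,z_N\}$, use that $t \mapsto e^{\kappa t}$ is strictly increasing to get $\max_i e^{\kappa z_i} = e^{\kappa z_{\max}}$, and again bound $e^{\kappa z_{\max}} < \sum_{i=1}^N e^{\kappa z_i} \le N e^{\kappa z_{\max}}$. Now apply the logarithm, divide by $\kappa > 0$ (which preserves the inequalities), and subtract $(\log N)/\kappa$ from all three sides to obtain $z_{\max} - (\log N)/\kappa < \mbox{softmax}_\kappa(z_1,\dots,z_N) \le z_{\max}$.

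The only thing that could trip one up is bookkeeping: the $-1/\kappa$ prefactor flips the inequalities in the soft-minimum case whereas the $1/\kappa$ prefactor does not in the soft-maximum case, and the strict ends of the bounds rely on positivity of the exponentials together with $N \ge 2$ (if $N=1$, both ``soft'' operators reduce to the identity and the strict inequalities become equalities). Neither point is a genuine obstacle, so I would not expect the proof to need anything beyond these elementary estimates.
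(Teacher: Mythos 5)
Your proof is correct. The paper states Fact~\ref{fact:softmin_limit} without proof, so there is no authors' argument to compare against; your derivation is the standard one: sandwich the exponential sum between its largest summand and $N$ copies of that summand, take logarithms, and track that the $-1/\kappa$ prefactor reverses the inequalities in the softmin case while $+1/\kappa$ preserves them in the softmax case. You also correctly note the implicit hypothesis that the strict inequalities require $N \ge 2$, since for $N=1$ both soft operators coincide exactly with the identity and the strict bounds degenerate to equalities.
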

Fact~\ref{fact:softmin_limit} shows that as $\kappa \to \infty$, $\mbox{softmin}_\kappa$ converges to the minimum and $\mbox{softmax}_\kappa$ converges to the maximum.

\section{Problem Formulation}\label{sec:problem formulation}

Consider
\begin{equation}\label{eq:affine control}
\dot x(t) = f(x(t))+g(x(t)) u(t), 
\end{equation}
where $x(t) \in \mathbb{R}^n $ is the state, $u(t) \in \mathbb{R}^m$ is the input, $x(0) = x_0 \in \mathbb{R}^n$ is the initial condition, and $f: \mathbb{R}^n \to \mathbb{R}^n$ and $g: \mathbb{R}^n \to \mathbb{R}^{n \times m}$ are locally Lipschitz continuous. 
Let $h_{\rm{s}}: \mathbb{R}^n \to \mathbb{R}$, and define the \textit{safe set} 
\begin{equation}
   S_{\rm{s}} \triangleq \{x \in \mathbb{R}^n \colon h_{\rm{s}}(x) \geq 0 \}, 
\end{equation}
which is not assumed to be known. 
Let $T_\rms > 0$, and assume that for all $k \in \mathbb{N}$, we obtain perception feedback at time $kT_\rms$ in the form of a function $b_k : \mathbb{R}^n \to \mathbb{R}$ such that:

\begin{enumerate}[leftmargin=0.9cm]
	\renewcommand{\labelenumi}{(A\arabic{enumi})}
	\renewcommand{\theenumi}{(A\arabic{enumi})}

 \item\label{con1}
$S_{k} \triangleq \{x \in \mathbb{R}^n : b_k(x) \ge 0\} \subseteq S_{\rm{s}}$.
    
\item\label{con2}
If $x(kT_\rms) \in S_{\rm{s}}$, then $x(kT_\rms) \in S_{k}$.

\item\label{con3}
There exist a positive integer $r$ such that $b_k(x)$ is $r$-times continuously differentiable, and for all $x \in \mathbb{R}^n$, $L_gb_k(x)=L_gL_fb_k(x)=\cdots=L_gL_f^{r-2}b_k(x)=0$ and $L_gL_f^{r-1}b_k(x) \neq 0$.

\end{enumerate}
Note that the perception information $b_k$ can be obtained from a variety of CBF synthesis methods (e.g., \cite{srinivasan2020synthesis,long2021learning}).

Next, consider the \textit{desired control} $u_{\rm{d}} : [0,\infty) \to \mathbb{R}^m$. 
The objective is to design a full-state feedback control $u:\mathbb{R}^n \to \mathbb{R}^m$ such that for all $t \ge 0$, $\| u(x(t)) -u_{\rm{d}}(t) \|_2$ is minimized subject to the safety constraint that $x(t) \in S_{\rm{s}}$.

All statements in this paper that involve the subscript $k$ are for all $k \in \mathbb{N} \triangleq \{ 0, 1, 2, \ldots \}$.

\section{Time-Varying CBF for Unknown Safe Set} \label{sec:Method}

Let $\eta:\mathbb{R} \to [0,1]$ be $r$-times continuously differentiable such that the following condition hold: 
\begin{enumerate}[leftmargin=0.9cm]
	\renewcommand{\labelenumi}{(C\arabic{enumi})}
	\renewcommand{\theenumi}{(C\arabic{enumi})}

 \item\label{con: con1_g}
For all $t\in (-\infty,0]$, $\eta(t) = 0$.

\item\label{con: con2_g}
For all $t\in [1,\infty)$, $\eta(t) = 1$.

\item\label{con: con3_g}
For all $i \in \{1,\cdots,r\}$, $\at{\frac{{\rm d}^i \eta(t)}{{\rm d}t^i}}{t=0} = 0$.

\item\label{con: con4_g}
For all $i \in \{1,\cdots,r\}$, $\at{\frac{{\rm d}^i \eta(t)}{{\rm d}t^i}}{t=1} = 0$. 
\end{enumerate}

The following example provides one possible choice of $\eta$ that satisfies \ref{con: con1_g}--\ref{con: con4_g}.

\begin{example}\label{ex:g}\rm
Let $\lambda \ge 1$, and consider $\eta:\mathbb{R} \to [0,1]$, defined by
\begin{equation}\label{eq:smoothstep}
\eta(t) \triangleq
\begin{cases}
        0, & t <0, \\
        \left(\lambda t\right)^{r+1} \sum_{j=0}^{r} \binom{r+j}{j}\binom{2r+1}{r-j}(-\lambda t)^j, & 0 \le t \le \frac{1}{\lambda}, \\
        1, & t>\frac{1}{\lambda}.
\end{cases}
\end{equation}
\Cref{fig:eta} is a plot of $\eta$ given by \Cref{eq:smoothstep} for different value of $\lambda$. 
\end{example}

\begin{figure}[t!]
\center{\includegraphics[width=0.44\textwidth,clip=true,trim= 0.42in 0.25in 1.1in 0.6in] {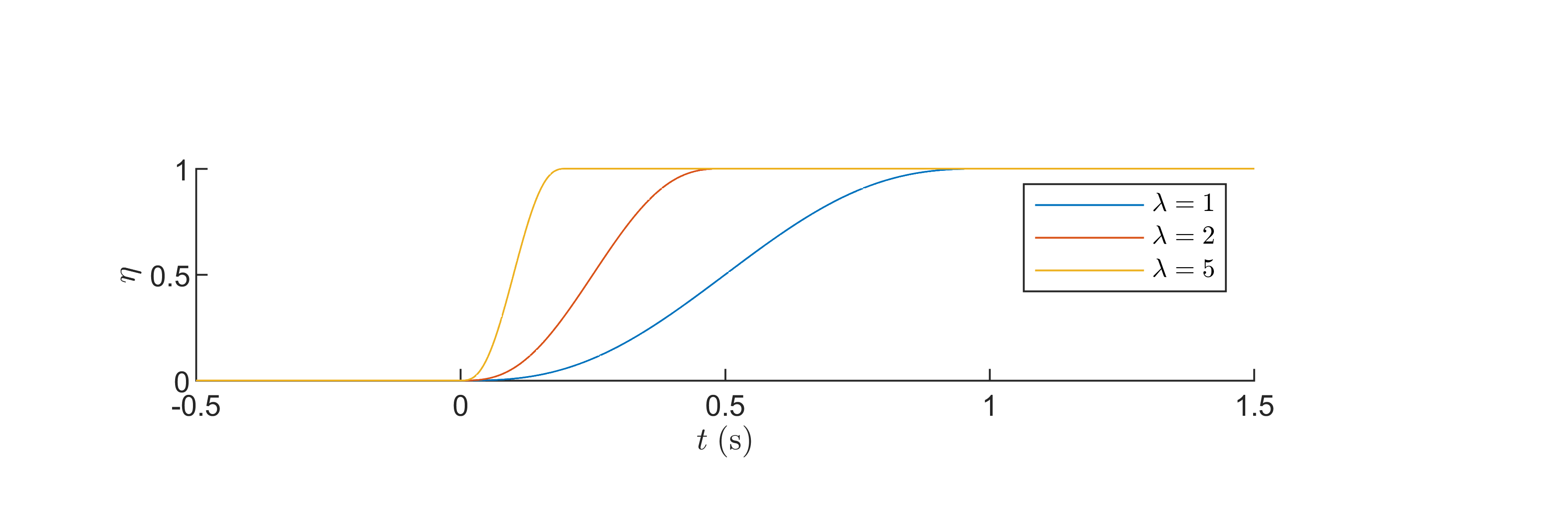}}
\caption{Example of $\eta$.}\label{fig:eta}
\end{figure}

Let $N$ be a positive integer, and consider $h:\mathbb{R}^n \times [0,\infty) \to \mathbb{R}$ such that for all $t \in [kT_\rms,(k+1)T_\rms)$,
\begin{align}\label{eq:softmax h}
h(x,t) &\triangleq \mbox{softmax}_\kappa \Bigg (b_{k-1}(x), \cdots, b_{k-N+1}(x),\nonumber\\
&\quad \eta\left(\frac{t}{T_\rms}-k\right)b_k(x)  + \left(1-\eta\left(\frac{t}{T_\rms}-k\right)\right)b_{k-N}(x) \Bigg ).
\end{align}
Note that $h$ is constructed from the $N$ most recently obtained perception feedback functions $b_k,\ldots,b_{k-N}$. 
The next result demonstrates that $h$ is $r$-times continuously differentiable. T
he proof is omitted for space considerations. 

\begin{fact}\label{fact:rth order continuity}
\rm{ The function $h(x,t)$ is $r$-times continuously differentiable with respect to $x$ and $t$.
}
\end{fact}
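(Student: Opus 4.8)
The plan is to observe that the only place where $C^r$-smoothness of $h$ could fail is at the sampling instants $t=(k+1)\Ts$, where the active index in \eqref{eq:softmax h} changes, and then to verify that all partial derivatives of total order at most $r$ have matching one-sided limits there. Indeed, on each open interval $(k\Ts,(k+1)\Ts)$ the function $h$ is plainly $r$-times continuously differentiable: by \eqref{eq:softmax h} it is a finite composition of $\mbox{softmax}_\kappa$ --- which is $C^\infty$, being $\tfrac{1}{\kappa}\log$ of a strictly positive sum of exponentials (see \eqref{eq:softmax}) --- with the arguments $b_{k-1}(x),\cdots,b_{k-N+1}(x)$ and $\eta(t/\Ts-k)\,b_k(x)+(1-\eta(t/\Ts-k))\,b_{k-N}(x)$, each of which is $C^r$ in $(x,t)$ by \ref{con3} and the regularity of $\eta$; finite sums and products and composition with a $C^\infty$ map all preserve $C^r$ regularity. (For the initial phase $k<N$, the negative-index $b_j$ are read according to whatever initialization is used to define $h$, which does not affect the argument; and $t=0$ needs no check, being the left endpoint of the domain.) By a standard gluing lemma for piecewise-defined functions it then suffices to show, at each $t=(k+1)\Ts$, that every $\partial_x^\alpha\partial_t^i h$ with $|\alpha|+i\le r$ has the same limit as $t\to(k+1)\Ts^-$ (using \eqref{eq:softmax h} with index $k$) and as $t\to(k+1)\Ts^+$ (using index $k+1$).

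For the pure spatial derivatives ($i=0$), note that $h$ depends on $t$ only through $p\triangleq\eta(t/\Ts-k)$: write $h(x,t)=H_k(x,\eta(t/\Ts-k))$, where $H_k(x,p)\triangleq\mbox{softmax}_\kappa(b_{k-1}(x),\cdots,b_{k-N+1}(x),\,p\,b_k(x)+(1-p)\,b_{k-N}(x))$ is $C^\infty$ in $p$ and $C^r$ in $x$. As $t\to(k+1)\Ts^-$, $p\to\eta(1)=1$ by \ref{con: con2_g}, so the left limit of $\partial_x^\alpha h$ is $\partial_x^\alpha\mbox{softmax}_\kappa(b_{k-1}(x),\cdots,b_{k-N+1}(x),b_k(x))$. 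As $t\to(k+1)\Ts^+$ the active index is $k+1$ and $\eta(t/\Ts-(k+1))\to\eta(0)=0$ by \ref{con: con1_g}, so the right limit is $\partial_x^\alpha\mbox{softmax}_\kappa(b_k(x),b_{k-1}(x),\cdots,b_{k-N+2}(x),b_{k-N+1}(x))$. Both argument lists equal $\{b_{k-N+1}(x),\cdots,b_k(x)\}$ up to a permutation, and $\mbox{softmax}_\kappa$ is permutation-invariant by inspection of \eqref{eq:softmax}; hence the one-sided limits coincide.

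For time order $i\in\{1,\cdots,r\}$, apply the higher-order chain rule (Fa\`a di Bruno) to $h(x,t)=H_k(x,\eta(t/\Ts-k))$: every term of $\partial_t^i h$ is a product of some $p$-derivative of $H_k$ with factors $\Ts^{-j}\eta^{(j)}(t/\Ts-k)$, $j\ge 1$, whose orders sum to $i$. Spatial differentiation only acts on the $x$-arguments of $H_k$, so every term of $\partial_x^\alpha\partial_t^i h$ with $i\ge 1$ still carries at least one factor $\eta^{(j)}(t/\Ts-k)$ with $1\le j\le i\le r$. By \ref{con: con4_g} we have $\eta^{(j)}(1)=0$ and by \ref{con: con3_g} we have $\eta^{(j)}(0)=0$ for every such $j$; hence the left limit (where $t/\Ts-k\to 1$) and the right limit (where $t/\Ts-(k+1)\to 0$) of $\partial_x^\alpha\partial_t^i h$ are both $0$, so they agree. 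Together with the previous paragraph, this shows that all partial derivatives of total order at most $r$ extend continuously across each sampling instant and match there, which completes the argument.

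The conceptual load here is light; the main obstacle is bookkeeping --- confirming that the two argument lists of $\mbox{softmax}_\kappa$ on either side of a sampling instant are permutations of each other (this is exactly where \ref{con: con1_g}, \ref{con: con2_g}, and symmetry of $\mbox{softmax}_\kappa$ are used) and checking via the chain rule that every surviving term of a higher time derivative is annihilated at $t/\Ts-k\in\{0,1\}$ by the flatness conditions \ref{con: con3_g}--\ref{con: con4_g}. I would expect the matching of the higher mixed partials to be the step most prone to indexing slips, even though each term vanishes for the same elementary reason.
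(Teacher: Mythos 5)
The paper itself gives no proof of this fact — it explicitly states that the proof is omitted for space — so there is no in-paper argument to compare your attempt against. On its own merits your proof is correct and complete: you reduce the problem to matching one-sided limits of all partial derivatives of total order at most $r$ at the sampling instants; the pure-$x$ case follows from \ref{con: con1_g}--\ref{con: con2_g} together with the permutation invariance of $\mbox{softmax}_\kappa$ (visible from \eqref{eq:softmax}); and any derivative involving $\partial_t$ at least once carries, by Fa\`a di Bruno, a factor $\eta^{(j)}$ with $1\le j\le i\le r$, which vanishes at both $t/\Ts-k=0$ and $t/\Ts-k=1$ by \ref{con: con3_g}--\ref{con: con4_g}, so both one-sided limits are $0$. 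The only hidden point worth making explicit is that the $H_k$-factors appearing in those Fa\`a di Bruno terms remain bounded near the boundary (they are continuous, being $C^{r-i}$ in $x$ and $C^\infty$ in $p$), which is needed for the product to actually tend to zero; this is implicit in your claim that $H_k$ is $C^r$ and $C^\infty$ in $p$, but would be worth one sentence in a polished write-up.
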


Next, we define the $0$-superlevel set of $h$. 
Specifically, consider $S \colon [0, \infty) \to \BBR^n$ defined by
\begin{equation} \label{eq:safe set final}
    S(t) \triangleq \{x \in \mathbb{R}^n \colon h(x,t) \geq 0 \}.
\end{equation}
The next result shows that $S(t)$, which is constructed from real-time perception data, is a subset of the safe set $S_{\rm{s}}$

\begin{fact}\label{fact:S(t)}
\rm{
   For all $t \in [0,\infty), S(t) \subseteq S_{\rm{s}}$. 
   }
\end{fact}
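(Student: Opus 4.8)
The plan is to show that for each fixed $t \ge 0$, every point in the $0$-superlevel set $S(t)$ of the time-varying composite barrier function $h$ also lies in $S_{\mathrm s}$. Fix $t \in [0,\infty)$ and let $k \in \mathbb{N}$ be the unique index such that $t \in [kT_{\mathrm s},(k+1)T_{\mathrm s})$. Suppose $x \in S(t)$, i.e. $h(x,t) \ge 0$. I would first invoke Fact~\ref{fact:softmin_limit}, whose soft-maximum inequality gives
\[
0 \le h(x,t) = \mbox{softmax}_\kappa(\cdots) \le \max\Bigl\{\, b_{k-1}(x),\ldots,b_{k-N+1}(x),\ \mu(x,t)\,\Bigr\},
\]
where $\mu(x,t) \triangleq \eta(\tfrac{t}{T_{\mathrm s}}-k)\,b_k(x) + \bigl(1-\eta(\tfrac{t}{T_{\mathrm s}}-k)\bigr)\,b_{k-N}(x)$ is the homotopy term. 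Hence at least one of these $N$ arguments is nonnegative.

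The next step is a case split on which argument achieves the nonnegative maximum. If it is one of $b_{k-1}(x),\ldots,b_{k-N+1}(x)$, say $b_j(x) \ge 0$ for some $j \in \{k-N+1,\ldots,k-1\}$, then $x \in S_j$ by the definition of $S_j$, and (A1) gives $S_j \subseteq S_{\mathrm s}$, so $x \in S_{\mathrm s}$. If instead the homotopy term satisfies $\mu(x,t) \ge 0$, I would argue that $\mu(x,t)$ is a convex combination of $b_k(x)$ and $b_{k-N}(x)$ with weights $\eta(\cdot) \in [0,1]$ and $1-\eta(\cdot) \in [0,1]$ (using that $\eta$ maps into $[0,1]$, together with conditions (C1)--(C2) which guarantee the argument of $\eta$ stays in a range where $\eta$ is well-defined as a $[0,1]$-valued map), so $\mu(x,t) \ge 0$ forces $\max\{b_k(x),\,b_{k-N}(x)\} \ge 0$; then $x \in S_k \subseteq S_{\mathrm s}$ or $x \in S_{k-N} \subseteq S_{\mathrm s}$ by (A1) again. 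In every case $x \in S_{\mathrm s}$, which proves $S(t) \subseteq S_{\mathrm s}$.

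One detail to handle cleanly is small $k$: for $k < N$ some of the indices $k-1,\ldots,k-N$ are negative, so I would either assume (as the problem setup implicitly does, since perception starts at $k=0$) that the relevant $b_k$ are only those with nonnegative index and that the soft-maximum is taken over the available functions, or note that $b_k$ for $k<0$ can be taken identically equal to some fixed valid local CBF (e.g. $b_0$), which keeps (A1) intact. Since the argument above only uses (A1) for whichever index is selected, this bookkeeping does not affect the logic.

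The main obstacle is essentially just the convex-combination observation for the homotopy term: one must be careful that $\eta(\tfrac{t}{T_{\mathrm s}}-k)\in[0,1]$ for the relevant $t$ (which follows from the codomain of $\eta$ together with (C1)--(C2)), and then that a nonnegative convex combination of two reals implies at least one of them is nonnegative — elementary, but it is the one place where the structure of the time-varying term (as opposed to a plain soft-maximum) actually enters. Everything else is a direct application of Fact~\ref{fact:softmin_limit} and assumption (A1).
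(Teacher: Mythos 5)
Your proof is correct and follows the same route as the paper's: from $h(x_1,t)\ge 0$, extract that some $b_{k-z}(x_1)\ge 0$ with $z\in\{0,\dots,N\}$, then conclude via (A1). The paper compresses the extraction step into a single ``Thus \eqref{eq:softmax h} implies\dots''; you make explicit the two facts that justify it, namely the upper bound $\mathrm{softmax}_\kappa(\cdot)\le\max(\cdot)$ from Fact~\ref{fact:softmin_limit}, and the observation that a nonnegative convex combination $\eta\,b_k(x)+(1-\eta)\,b_{k-N}(x)\ge 0$ with $\eta\in[0,1]$ forces $b_k(x)\ge 0$ or $b_{k-N}(x)\ge 0$. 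That second point is the one place where the time-varying homotopy genuinely enters and the paper's proof leaves it implicit, so spelling it out is a real improvement rather than mere verbosity. One small nit: the fact that the weight lies in $[0,1]$ is just the declared codomain of $\eta$; citing (C1)--(C2) for it is unnecessary (those conditions govern the boundary values and derivatives, not the range). Your treatment of the small-$k$ indexing (negative subscripts) is a sensible convention that the paper silently assumes.
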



For $i \in \{1, \cdots, r\}$, let $\alpha_i:\mathbb{R} \to \mathbb{R}$ be a extended class $\mathcal{K}$ function that is $(r-i)$-times continuously differentiable. 
Next, consider $\psi_0:\mathbb{R}^n \times [0, \infty) \to \mathbb{R}$ defined by $\psi_0(x,t) \triangleq h(x,t)$, and for $i \in \{1,\cdots,r-1\}$, consider $\psi_i:\mathbb{R}^n \times [0, \infty) \to \mathbb{R}$ defined by
\begin{equation}\label{eq:HOCBF}
        \psi_i(x,t) \triangleq \frac{\partial \psi_{i-1}(x,t)}{\partial t} + L_f \psi_{i-1}(x,t) +\alpha_{i}(\psi_{i-1}(x,t)).
\end{equation}

Next, we design a control that for all $t \ge 0$, seeks to minimize $\| u(x(t)) -u_{\rm{d}}(t) \|_2$ subject to the safety constraint that $x(t) \in S(t) \subset S_\rms$.  
For all $t \in [0,\infty)$ and all $x \in \mathbb{R}^n$, the control is given by
\begin{subequations}\label{eq:qp}
\begin{align}
 & u(x,t) \triangleq \underset{ \hat{u} \in \mathbb{R}^m}{\mbox{argmin}}  \, 
\| \hat{u} - u_{\rm{d}}(t)\|^2 \label{eq:qp_softmin.a}\\
\text{subject to}\nonumber \\
& \begin{aligned}\label{eq:HOCBF_const}
        & \frac{\partial \psi_{r-1}(x,t)}{\partial t} + L_f \psi_{r-1}(x,t)  \\
        &\qquad + L_g\psi_{r-1}(x,t)\hat{u} +\alpha_{r}(\psi_{r-1}(x,t)) \ge 0.
\end{aligned}
\end{align}
\end{subequations}





To analyze the control~\Cref{eq:softmax h,eq:safe set final,eq:HOCBF,eq:qp}, for all $i \in \{1,\cdots,r-1\}$ define
\begin{equation}\label{eq:HOCBF set}
    C_i(t) \triangleq \{x \in \mathbb{R}^n \colon \psi_{i}(x,t) \ge 0\},
\end{equation}
and 
\begin{equation}\label{eq:Common set}
    \Gamma(t) \triangleq S(t) \cap C_1(t) \cap \cdots \cap C_{r-1}(t). 
\end{equation}

The next result is the main result on the control~\Cref{eq:softmax h,eq:safe set final,eq:HOCBF,eq:qp}, which is constructed from real-time perception data $b_k$. 
%
%
This result shows that the control guarantees safety and follows from \cite[Theorem~3]{xiao2021high}.

\begin{theorem}\label{Th:Main th}
\label{thm:softmin}
\rm{ 
Consider \eqref{eq:affine control}, where \ref{con1}--\ref{con3} are satisfied, and consider $u$ given by~\Cref{eq:softmax h,eq:safe set final,eq:HOCBF,eq:qp}. 
Assume that for all $t \in [0,\infty)$ and all $x \in S(t)$, $L_g\psi_{r-1}(x,t) \ne 0$. 
Let $x_0 \in \Gamma(0)$. 
Then, for all $t \in [0,\infty)$, $x(t) \in \Gamma(t) \subseteq S_\rms$.
}
\end{theorem}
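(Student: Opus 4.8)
The plan is to run the high-order CBF argument of \cite{xiao2021high}, adapted to the present time-varying construction, by establishing the descending chain of trajectory-wise inequalities $\psi_{r-1}(x(t),t)\ge 0,\ \psi_{r-2}(x(t),t)\ge 0,\ \ldots,\ \psi_0(x(t),t)\ge 0$ via repeated use of a scalar comparison argument, and then invoking Fact~\ref{fact:S(t)} to upgrade $x(t)\in\Gamma(t)$ to $x(t)\in S_\rms$.

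The first step is to record the relative-degree structure of the composite CBF $h$. Using \ref{con3} together with the observation that $h$ is built from $b_k,\ldots,b_{k-N}$ by affine combination with the $t$-dependent weights $\eta(t/T_\rms-k)$ followed by composition with the smooth map $\mbox{softmax}_\kappa$ --- operations that do not involve the input $u$ --- I would show by induction that $L_g\psi_i(x,t)=0$ for every $i\in\{0,\ldots,r-2\}$: when $L_f^j h$ is expanded by the chain and product rules, every term has $L_g$ landing either on a partial derivative of $\mbox{softmax}_\kappa$ evaluated at the $b_\ell$'s (yielding $\sum_\ell(\cdot)\,L_g b_\ell = 0$) or on some $L_f^a b_\ell$ with $a\le j\le r-2$ (yielding $L_g L_f^a b_\ell = 0$), and $L_g$ commutes with $\partial/\partial t$ because $g$ is $t$-independent; the same bookkeeping carries through the recursion \eqref{eq:HOCBF}. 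The assumption $L_g\psi_{r-1}(x,t)\ne 0$ on $S(t)$ guarantees the quadratic program \eqref{eq:qp} is feasible, so $u(x,t)$ is well defined, and, by the regularity in Fact~\ref{fact:rth order continuity}, each map $t\mapsto\psi_i(x(t),t)$ is continuously differentiable along the closed-loop trajectory, which I take to exist on $[0,\infty)$. Consequently, for $i\le r-2$,
\[
\frac{{\rm d}}{{\rm d}t}\,\psi_i(x(t),t) = \frac{\partial\psi_i}{\partial t} + L_f\psi_i = \psi_{i+1}(x(t),t) - \alpha_{i+1}\big(\psi_i(x(t),t)\big),
\]
while evaluating \eqref{eq:HOCBF_const} at $\hat u = u(x,t)$ gives $\frac{{\rm d}}{{\rm d}t}\psi_{r-1}(x(t),t) \ge -\alpha_r(\psi_{r-1}(x(t),t))$.

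Next I would propagate nonnegativity down the chain. Since $x_0\in\Gamma(0)$, \eqref{eq:safe set final}, \eqref{eq:HOCBF set}, and \eqref{eq:Common set} give $\psi_i(x_0,0)\ge 0$ for all $i\in\{0,\ldots,r-1\}$. For $i=r-1$: with $y(t)\triangleq\psi_{r-1}(x(t),t)$ we have $y(0)\ge 0$ and $\dot y\ge -\alpha_r(y)$, which forces $y(t)\ge 0$ for all $t\ge 0$ --- otherwise pick $t_1$ with $y(t_1)<0$, set $t_0\triangleq\sup\{t\in[0,t_1]:y(t)\ge 0\}$ so that $y(t_0)=0$ and $y<0$ on $(t_0,t_1]$, and note $\dot y(t)\ge-\alpha_r(y(t))>0$ on $(t_0,t_1]$ since $\alpha_r$ is extended class $\mathcal{K}$, contradicting $y(t_1)<y(t_0)$. (This direct argument is used because an extended class-$\mathcal{K}$ function need not be Lipschitz, so the uniqueness-based comparison lemma does not apply verbatim.) Feeding $\psi_{r-1}(x(t),t)\ge 0$ into the displayed identity at level $i=r-2$ gives $\frac{{\rm d}}{{\rm d}t}\psi_{r-2}(x(t),t)\ge-\alpha_{r-1}(\psi_{r-2}(x(t),t))$ with $\psi_{r-2}(x_0,0)\ge 0$, and the same argument yields $\psi_{r-2}(x(t),t)\ge 0$; iterating down to $i=0$ gives $\psi_i(x(t),t)\ge 0$ for every $i$ and every $t\ge 0$. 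Hence $x(t)\in S(t)\cap C_1(t)\cap\cdots\cap C_{r-1}(t)=\Gamma(t)$, and Fact~\ref{fact:S(t)} gives $\Gamma(t)\subseteq S(t)\subseteq S_\rms$.

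The step I expect to be the main obstacle is the first one --- confirming that the soft-maximum composition with the time homotopy in \eqref{eq:softmax h} preserves the relative-degree-$r$ property \ref{con3}, i.e., that $L_g\psi_i\equiv 0$ for $i\le r-2$ --- because this is exactly what lets each $\frac{{\rm d}}{{\rm d}t}\psi_i(x(t),t)$ be rewritten free of $u$ and makes the comparison chain close. The remainder is the standard high-order-CBF recursion together with Fact~\ref{fact:S(t)}, modulo the minor technical points of QP feasibility, forward completeness of the closed loop, and the direct comparison argument noted above.
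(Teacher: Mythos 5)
Your proof is correct and follows the same route the paper intends: the paper itself gives no argument and simply cites Theorem~3 of \cite{xiao2021high}, which is precisely the high-order-CBF comparison chain you reconstruct. You actually supply more detail than the paper does --- in particular the induction showing that $L_g\psi_i\equiv 0$ for $i\le r-2$ survives the softmax-plus-homotopy composition (which \ref{con3} alone does not immediately give), and the direct comparison argument sidestepping the non-Lipschitz issue for extended class-$\mathcal{K}$ functions --- so this is a faithful filling-in of the cited result in the time-varying setting rather than a genuinely different proof.
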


\section{Application to a Nonholonomic Ground Robot}

Consider the nonholonomic ground robot modeled by \eqref{eq:affine control}, where
\begin{equation*}
    f(x) = \begin{bmatrix}
     v\cos\theta \\
     v\sin\theta \\
    0 \\
    0
    \end{bmatrix}, 
    \,
    g(x) = \begin{bmatrix}
    0 & 0\\
    0 & 0\\
    1 & 0 \\
    0 & 1
    \end{bmatrix}, 
    \,
    x = \begin{bmatrix}
    q_x\\
    q_y\\
    v\\
    \theta
    \end{bmatrix}, 
    \,
    u = \begin{bmatrix}
    u_1\\
    u_2
    \end{bmatrix}, 
\end{equation*}
and $\matl{cc} q_x & q_y \matr^\rmT$ is the robot's position in an orthogonal coordinate frame, $v$ is the speed, and $\theta$ is the direction of the velocity vector.

Let $q_\rmg\in\mathbb{R}^2$ be the goal location, which is denoted by $q_\rmg = \matl{cc} q_{x,\rmg} & q_{y,\rmg} \matr^\rmT $. 
The desired control is
\begin{equation*}
    u_\rmd\triangleq \begin{bmatrix}
    u_{\rmd_1}\\u_{\rmd_2}
\end{bmatrix}, 
\end{equation*}
 where
\begin{gather*}
u_{\rmd_1} \triangleq  -(k_1+k_3)v + (1+k_1k_3)\rho\cos\delta + k_1(\rho k_2+v)\sin^2\delta,\\
u_{\rmd_2} \triangleq \left ( k_2+\frac{v}{\rho} \right )\sin\delta,\\
\rho\triangleq\sqrt{(q_x-q_{x,\rmg})^2+(q_y-q_{y,\rmg})^2}, \\
\delta\triangleq\mbox{atan2}(q_y-q_{y,\rmg},q_x-q_{x,\rmg})-\theta + \pi,
\end{gather*}
and $k_1 = 0.5$, $k_2 = 3$, and $k_3 = 3$.
This desired control is designed using a process similar to \cite[pp.~30--31]{de2002control}.

Next, define $\chi(x) \triangleq \matl{cc} I_2 & 0_{2\times2} \matr^\rmT x$, which extracts the position elements of the state. 
The robot is equipped with sensing capability (e.g., LiDAR) that can detect $P$ points on objects that are inside an $\bar r > 0$ detection radius of the robot's position. 
At time $k T_\rms$, the robot obtains the detection points $(r_{1,k},\theta_{1,k}),\cdots,(r_{P,k},\theta_{P,k})$, which are the polar coordinates of detected objects relative to the robot's position $\chi\left(x\left(kT_\rms\right)\right)$, where $r_{i,k} \in [0,\bar r]$ and $\theta_{i,k} \in [0,2\pi)$.
If no object is detected at angle $\theta_{i,k}$, then $r_{i,k} = \bar r$.

    


Next, for each detected point, we define a function whose $0$-level set is an ellipse that contains the detected point and extends from the detected point to the edge of the detection radius. 
Specifically, for $i \in \{1,\cdots,P\}$, consider $\sigma_{i,k}:\BBR^4 \to \BBR$ defined by 
\begin{equation}\label{eq:ellipse}
    \sigma_{i,k}(x) \triangleq \left ( \chi(x) - c_{i,k} \right )^\rmT R_{i,k} P_{i,k} R^\rmT_{i,k} \left (\chi(x) - c_{i,k} \right ) - 1, 
\end{equation}
where 
\begin{gather*}
    c_{i,k} \triangleq \chi\left(x\left(kT_\rms\right)\right) + \frac{\bar r+r_{i,k}}{2}\begin{bmatrix}
        \cos\theta_{i,k} \\ \sin\theta_{i,k}
    \end{bmatrix},\\
    R_{i,k} \triangleq \begin{bmatrix}
        \cos\theta_{i,k} & -\sin\theta_{i,k} \\ \sin\theta_{i,k}& \cos\theta_{i,k}
    \end{bmatrix},\\
    P_{i,k} \triangleq \begin{bmatrix}
        a_{i,k}^{-2}&0 \\ 0 & d_w^{-2}
    \end{bmatrix},
\end{gather*}
and $a_{i,k} \triangleq \frac{\bar r-r_{i,k}}{2} + d_s$ is the length of the semi-major axis, $d_w >0$ is the length of the semi-minor axis, and $d_s \ge 0$ is a safety margin. 
Note that the $0$-superlevel set of $\sigma_{i,k}$ is the area outside the ellipse that contains the $i$th detected point.
For the examples in this section, we let $d_w = 0.3$, $d_s = 0.3$, and $\bar r = 5$.

We consider two cases for sensing capability: (1) a $360^\circ$ field of view, and (2) a limited (e.g., $100^\circ$) field of view.

\subsection{$360^\circ$ Field of View}\label{sec:360_FOV}

We now use the ellipses $\sigma_{1,k},\cdots,\sigma_{p,k}$, which are generated from the raw perception data, to construct the perception feedback function $b_k$. 
We use a soft minimum to combine the ellipses and construct the perception feedback function $b_k$. 
Specifically, we let 
\begin{equation} \label{eq:bk_ex1}
    b_k(x) = \mbox{softmin}_{\kappa_1} \left(\sigma_{1,k}(x),\cdots,\sigma_{P,k}(x)\right).
\end{equation}

\Cref{fig:ex1_safeset} shows a map of the \textit{a priori} unknown environment that the ground robot must navigate. 
The figure also shows the $0$-level sets of $\sigma_{1,0},\cdots,\sigma_{P,0}$, which are constructed from the raw perception data at $t=0$~s.
Notice that each ellipse contains a detected point on the obstacles, and the detected point is on the interior of the ellipse because the safety margin $d_s > 0$ which makes the length of the semi-major axis greater than the distance from the detected point to the boundary of the detection radius.

We implement the soft-maximum CBF control~\Cref{eq:softmax h,eq:safe set final,eq:HOCBF,eq:qp} is implemented with $T_\rms = 0.2 s$, $\kappa_1 = 20$, $\kappa = 20$, $N = 2$, $P = 100$, $\alpha_1 =20$, $\alpha_2 =20$, and $\eta$ given by \Cref{ex:g} where $r=2$ and $\lambda = 1$.

\Cref{fig:ex1_map} shows the map and closed-loop trajectories for $x_0 = [\,5\quad 2\quad0\quad0\,]^\rmT$ with 3 different goals locations $q_\rmg = [\,13\quad5\,]^\rmT$, $q_\rmg = [\,10\quad13\,]^\rmT$, and $q_\rmg = [\,1\quad11\,]^\rmT$. 
In all cases, the robot position converges to the goal location while satisfying safety constraints. 
\Cref{fig:ex1_h,fig:ex1_state} provide time histories of relevant signals for the case where $q_\rmg = [\,13\quad5\,]^\rmT$. \Cref{fig:ex1_h} shows $h$ and $\psi_1$, which are always positive, indicating that the safety constraint is satisfied. 
\Cref{fig:ex1_state} shows $q_x$, $q_y$, $v$, $\theta$, $u_\rmd$, and $u$. 
Note that the control \Cref{eq:softmax h,eq:safe set final,eq:HOCBF,eq:qp} modifies the desired control signals to avoid collisions with detected obstacles.

\begin{figure}[t!]
\center{\includegraphics[width=0.44\textwidth,clip=true,trim= 0.30in 0.15in 0.7in 0.5in] {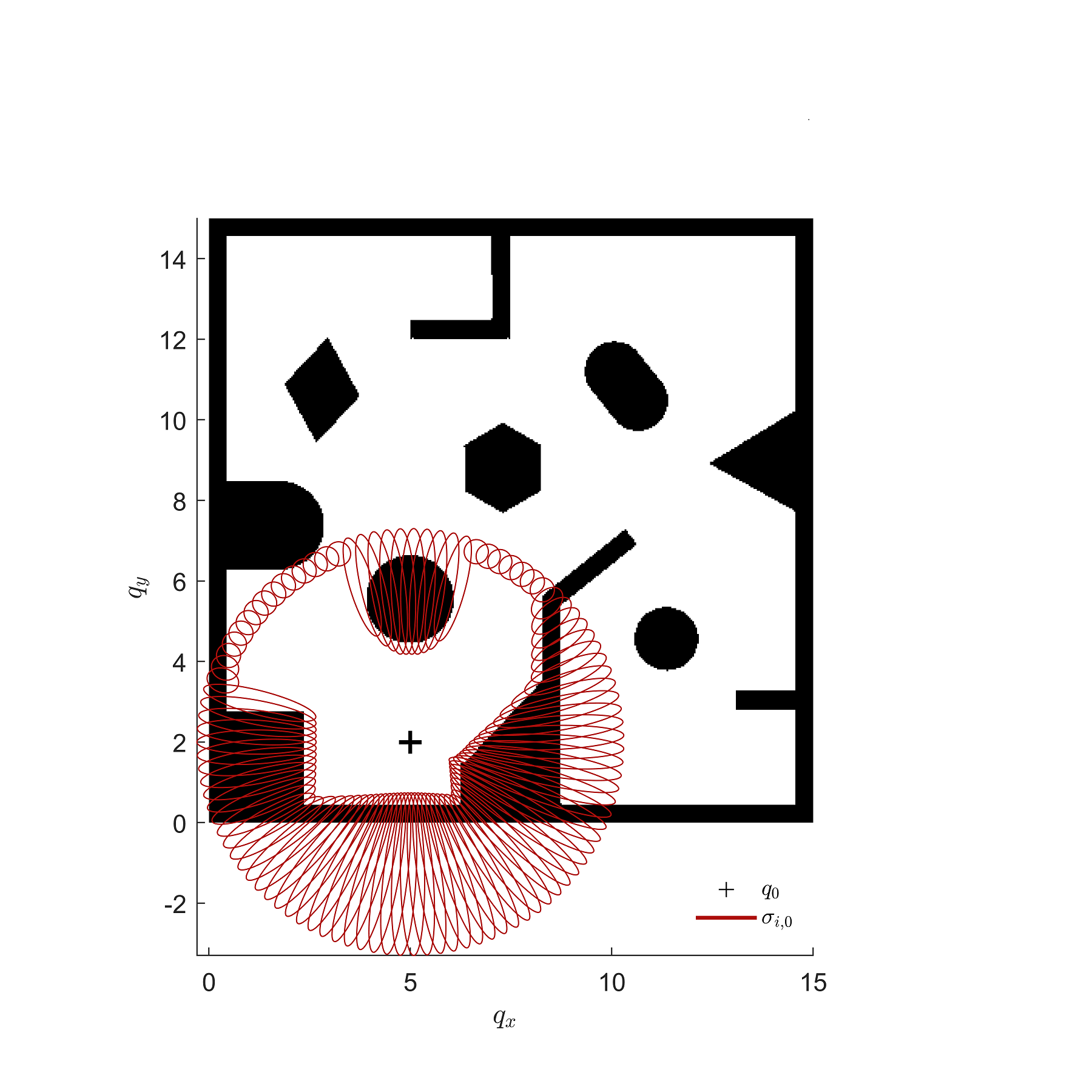}}
\caption{Map of \textit{a priori} unknown environment.}\label{fig:ex1_safeset}
\end{figure} 
\begin{figure}[t!]
\center{\includegraphics[width=0.44\textwidth,clip=true,trim= 0.25in 0.35in 1.1in 1.1in] {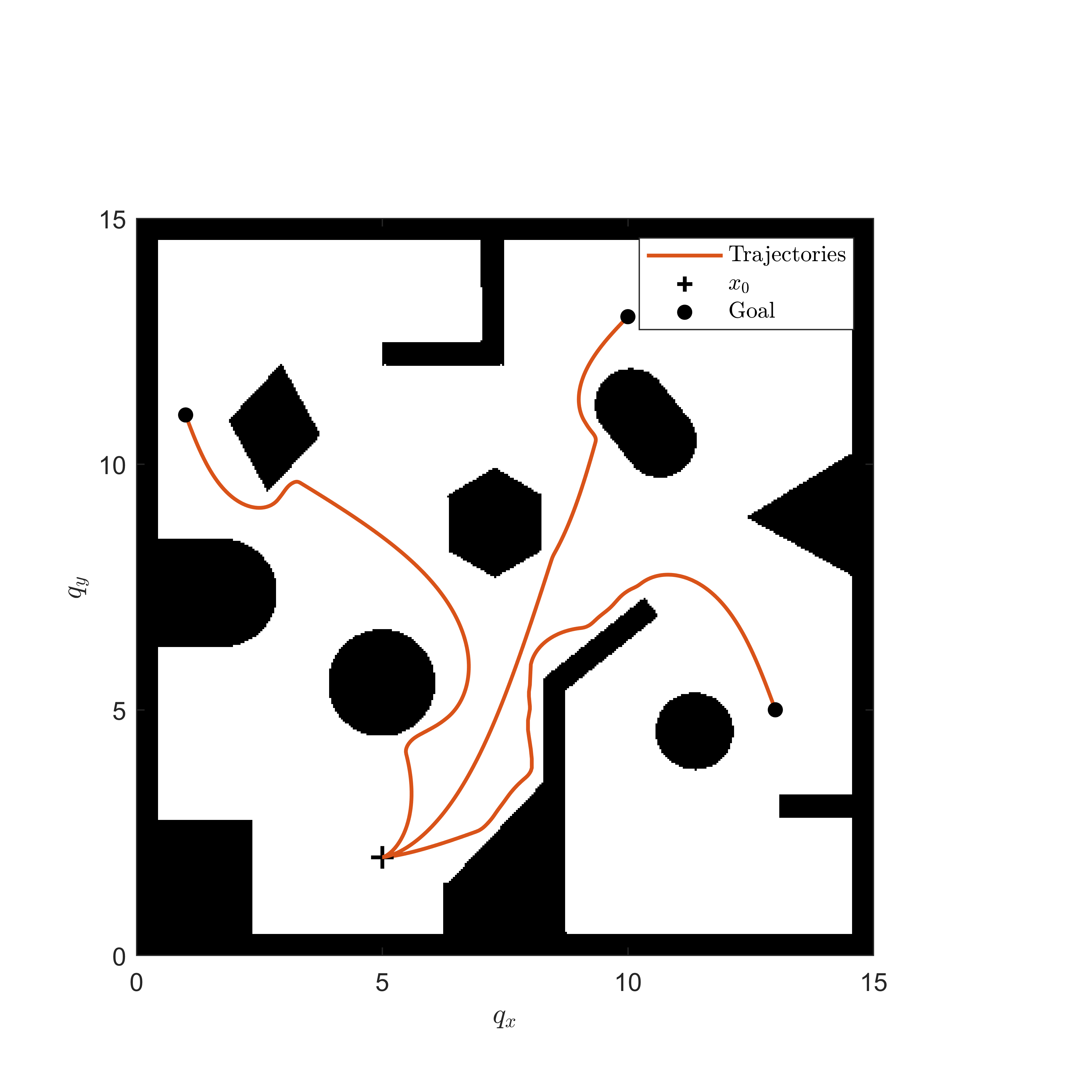}}
\caption{Three closed-loop trajectories with $360^{\circ}$ LiDAR.}\label{fig:ex1_map}
\end{figure} 
\begin{figure}[t!]
\center{\includegraphics[width=0.44\textwidth,clip=true,trim= 0.3in 0.27in 1.1in 0.5in] {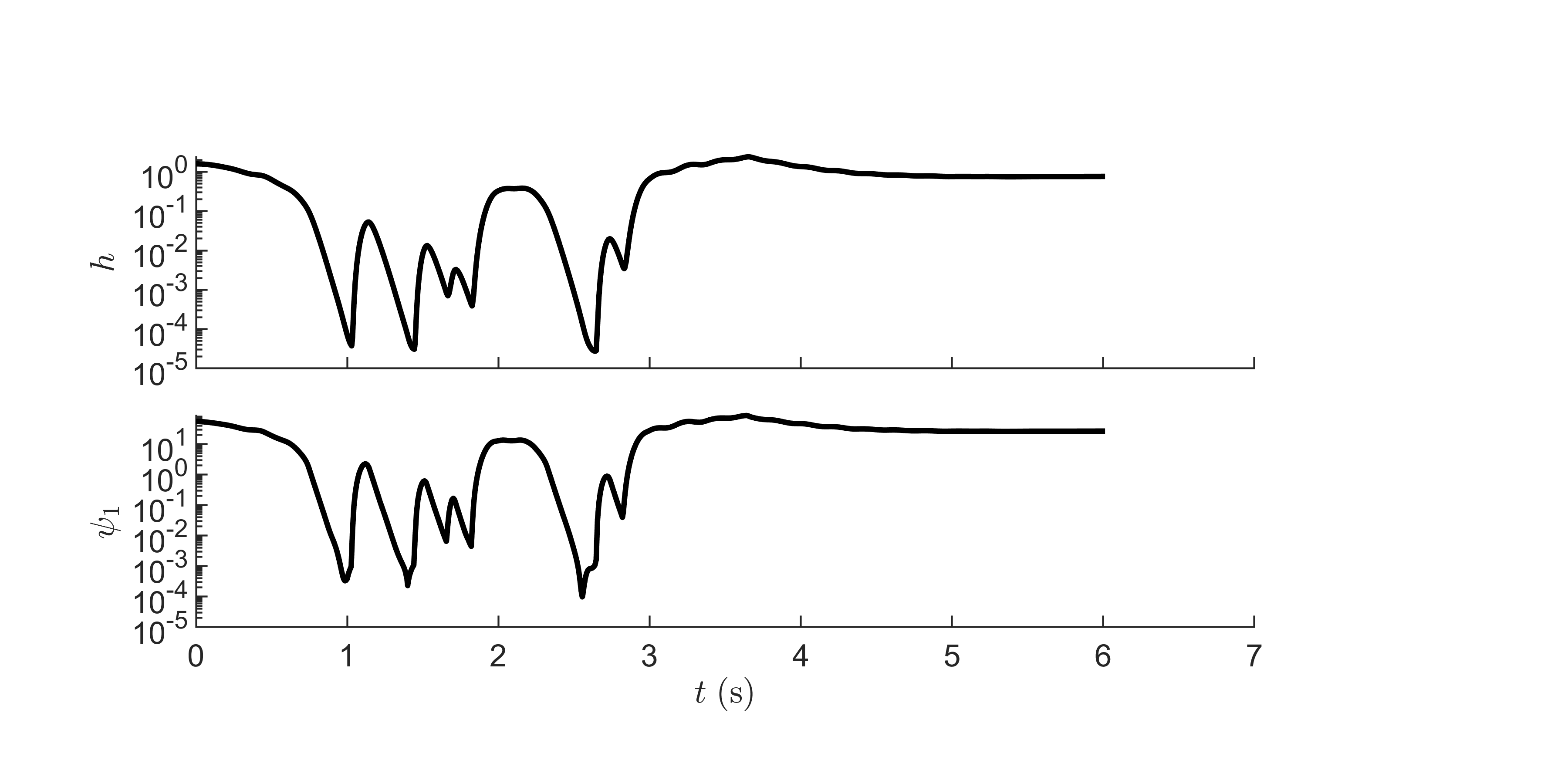}}
\caption{$h$ and $\psi_1$ for $q_\rmg = [\,13\quad5\,]^\rmT$ .}\label{fig:ex1_h}
\end{figure} 
\begin{figure}[t!]
\center{\includegraphics[width=0.44\textwidth,clip=true,trim= 0.4in 0.47in 1.1in 0.7in] {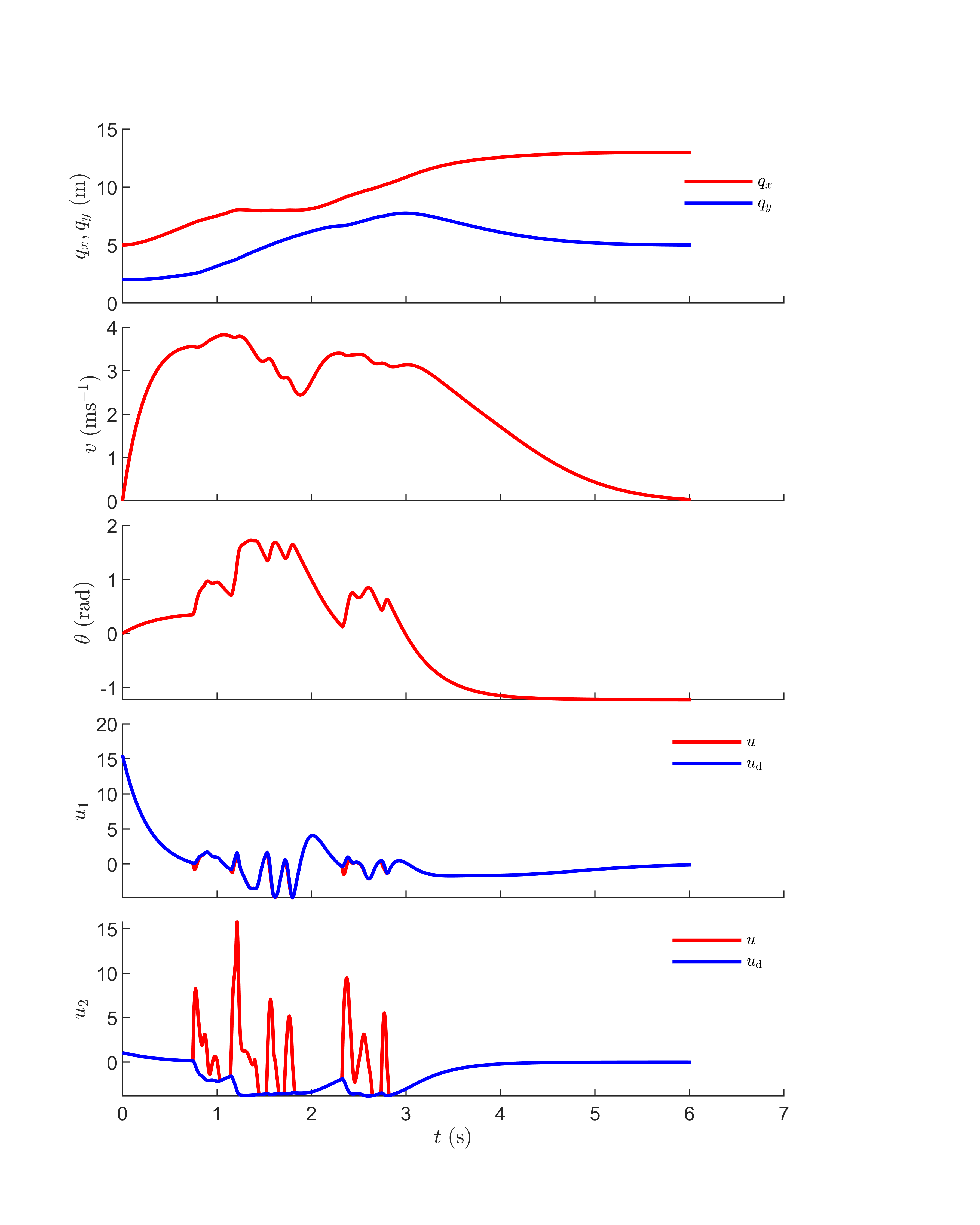}}
\caption{$q_x$, $q_y$, $v$, $\theta$, $u_\rmd$, and $u$ for $q_\rmg = [\,13\quad5\,]^\rmT$.}\label{fig:ex1_state}
\end{figure}

\subsection{Limited Field of View}\label{sec:limited_FOV}

Next, we consider the scenario where the robot's sensing capability has a limited field of view (i.e., $100^{\circ}$) as shown in \Cref{fig:FOV_safeset}. 
To model the limited field of view, we sample $L$ points along the boundary of the field of view at time $kT_\rms$. 
For each $i \in \{1,\cdots,L\}$, we construct an ellipse denoted by $\gamma_{i,k}:\BBR^4 \to \BBR$ using \eqref{eq:ellipse} that contains the point on the boundary and goes to the maximum range $\bar r$.

We now use the ellipses $\sigma_{1,k},\cdots,\sigma_{p,k}$, which are generated from the raw perception data, and ellipses $\gamma_{1,k},\cdots,\gamma_{p,k}$, which generated from the field of view boundary, to construct the perception feedback function $b_k$. 
We use a soft minimum to combine the ellipses and construct the perception feedback function $b_k$. 
Specifically, we let  
\begin{gather*}
     b_k(x) = \mbox{softmin}_{\kappa_1} (\sigma_{1,k}(x),\cdots,\sigma_{P,k}(x), \gamma_{1,k}(x),\cdots,\gamma_{L,k}(x) ).
\end{gather*}

\Cref{fig:ex1_FOV_safeset} shows a map of the \textit{a priori} unknown environment that the ground robot must navigate. 
The figure also shows the $0$-level sets of $\sigma_{1,0},\cdots,\sigma_{P,0}$ and $\gamma_{1,0},\cdots,\gamma_{L,0}$, which are constructed from the raw perception data and boundary of the $100^\circ$ field of view obtained at $t=0$ s.


The soft-maximum CBF control \Cref{eq:softmax h,eq:safe set final,eq:HOCBF,eq:qp} is implemented with $T_\rms = 0.2 s$, $\kappa_1 = 30$, $\kappa = 30$, $N = 6$, $P = 100$, $L = 400$, $\alpha_1 =30$,  $\alpha_2 =30$, and $\eta$ given by \Cref{ex:g}, where $r = 2$ and $\lambda = 1$.

\begin{figure}[t!]
\center{\includegraphics[scale = 0.4,clip=true,trim= 1.2in 0.4in 0.7in 0.3in] {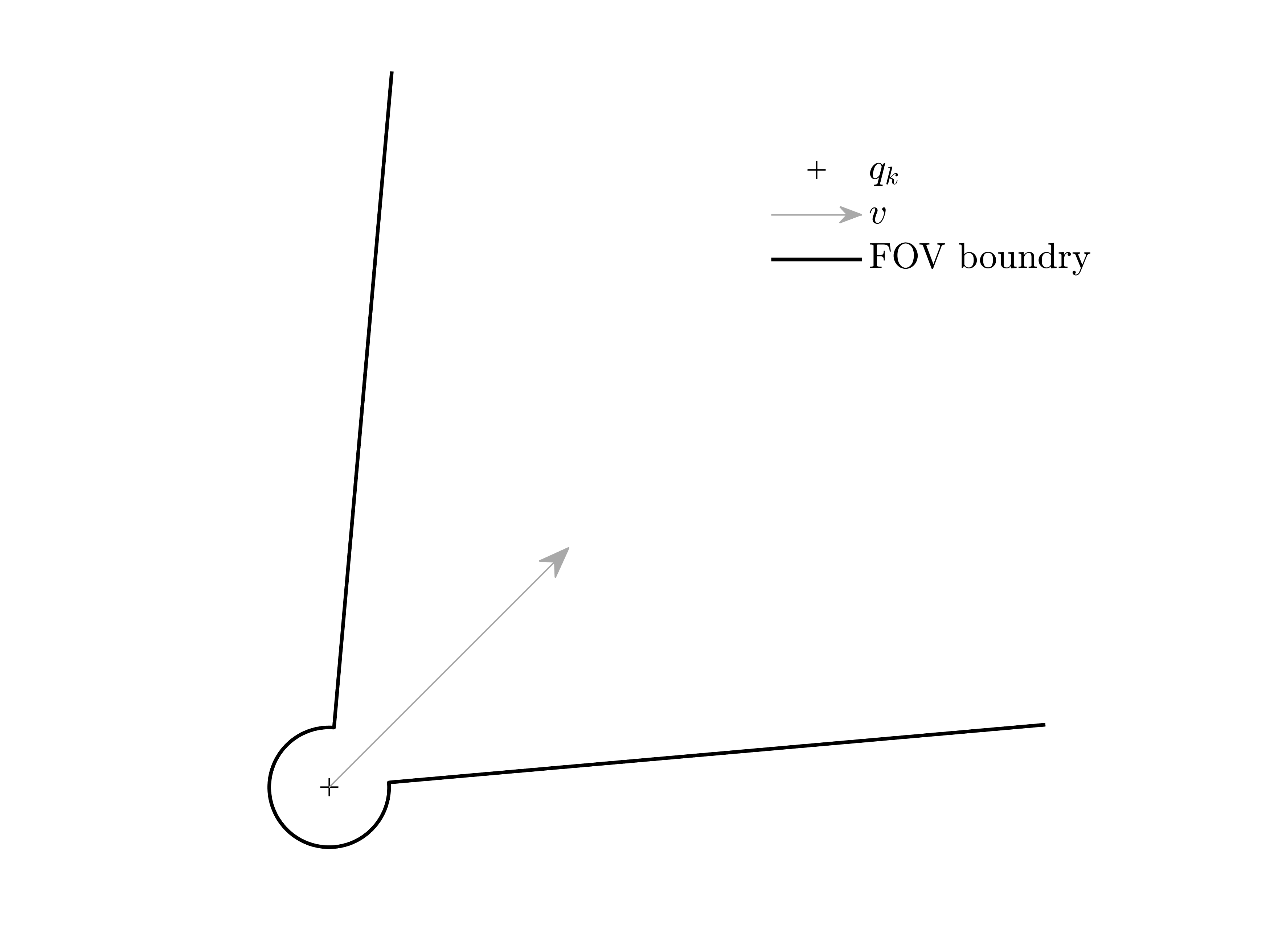}}
\caption{Boundry of the $100^\circ$ field of view at $t = kT_\rms$}\label{fig:FOV_safeset}
\end{figure}

\Cref{fig:ex1_FOV_map} shows the map and closed-loop trajectories  for $x_0 = [\,5\quad2\quad0\quad0\,]^\rmT$ with 3 different goals locations $q_\rmg = [\,11\quad2\,]^\rmT$, $q_\rmg = [\,10\quad13\,]^\rmT$, and $q_\rmg = [\,1\quad11\,]^\rmT$.
In all cases, the robot converges to the goal location while satisfying safety. 
For the case where $q_\rmg = [\,13\quad5\,]^\rmT$, \Cref{fig:ex1_FOV_h,fig:ex1_FOV_state} provide time histories of relevant signals. 
\Cref{fig:ex1_FOV_h} shows that $h$ and $\psi_1$ are always positive, and \Cref{fig:ex1_FOV_state} shows $q_x$, $q_y$, $v$, $\theta$, $u_\rmd$, and $u$.

\begin{figure}[t!]
\center{\includegraphics[width=0.44\textwidth,clip=true,trim= 0.30in 0.15in 0.7in 0.5in] {ex1_FOV_safeset_red (2)}}
\caption{Map of \textit{a priori} unknown environment.}\label{fig:ex1_FOV_safeset}
\end{figure} 
\begin{figure}[t!]
\center{\includegraphics[width=0.44\textwidth,clip=true,trim= 0.25in 0.35in 1.1in 1.1in] {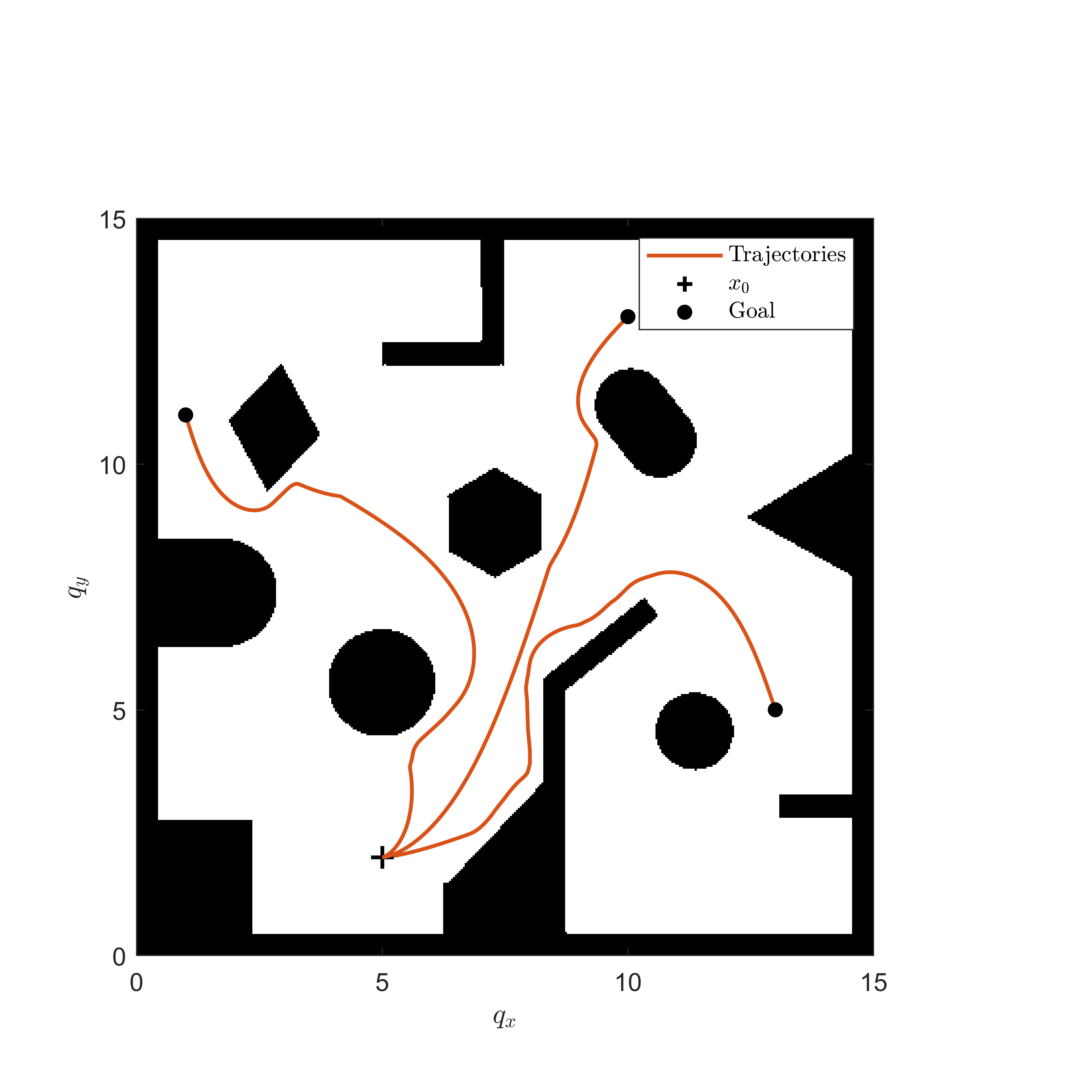}}
\caption{Three closed-loop trajectories with $100^{\circ}$ LiDAR.}\label{fig:ex1_FOV_map}
\end{figure} 
\begin{figure}[t!]
\center{\includegraphics[width=0.44\textwidth,clip=true,trim= 0.3in 0.27in 1.1in 0.5in] {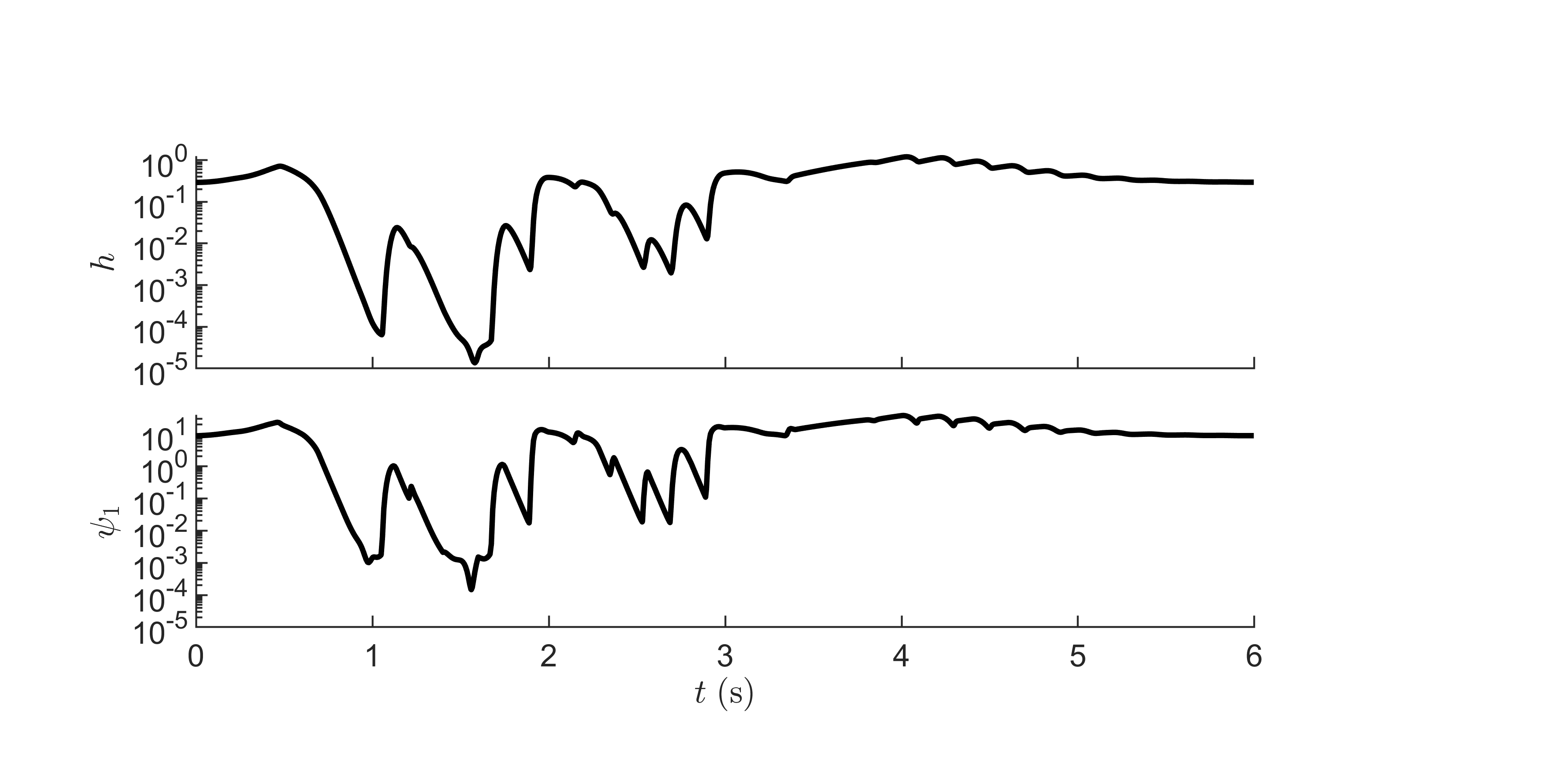}}
\caption{$h$ and $\psi_1$ for $q_\rmg = [\,13\quad5\,]^\rmT$ .}\label{fig:ex1_FOV_h}
\end{figure} 
\begin{figure}[t!]
\center{\includegraphics[width=0.44\textwidth,clip=true,trim= 0.4in 0.47in 1.1in 0.7in] {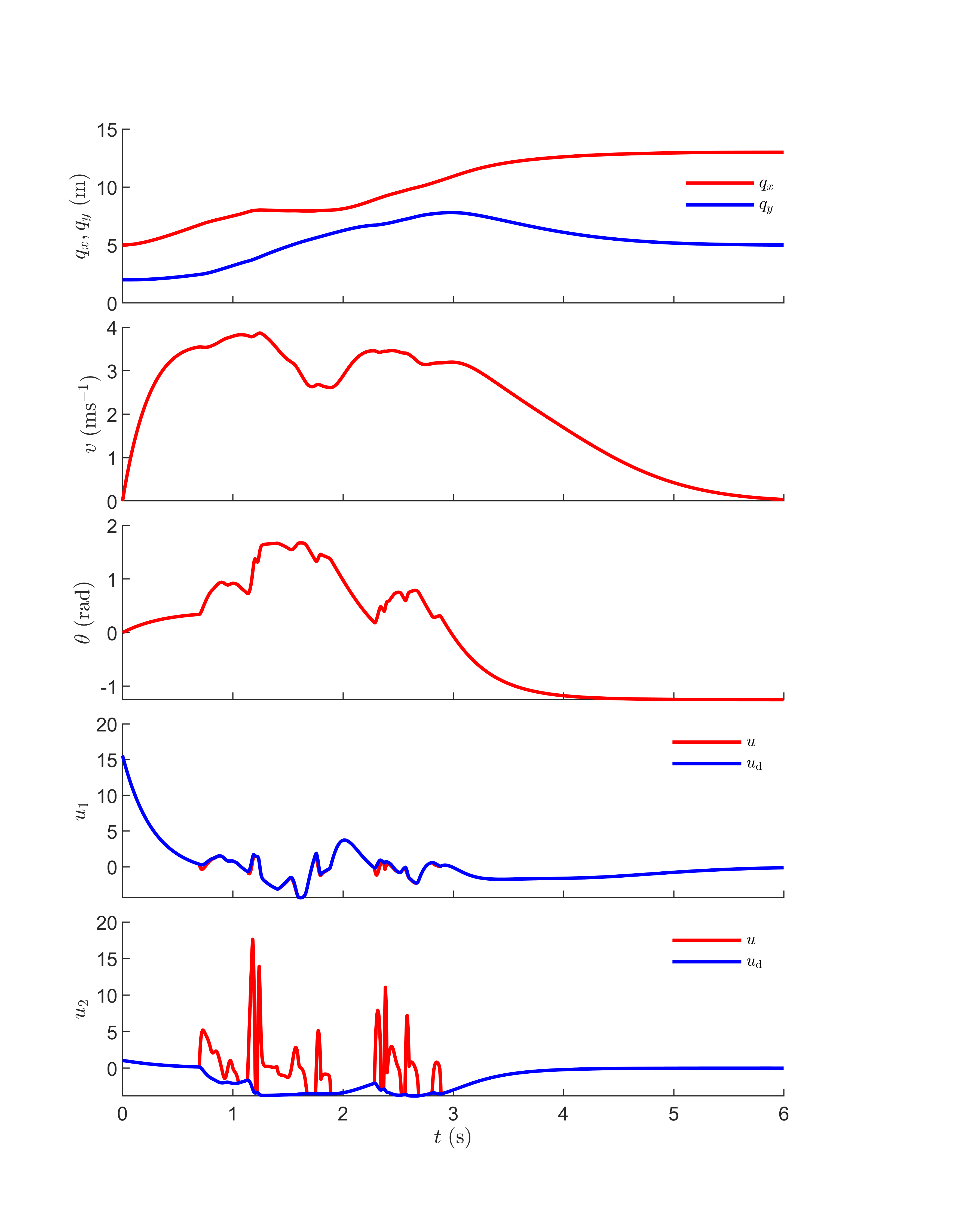}}
\caption{$q_x$, $q_y$, $v$, $\theta$, $u_\rmd$, and $u$ for $q_\rmg = [\,13\quad5\,]^\rmT$.}\label{fig:ex1_FOV_state}
\end{figure} 
\bibliographystyle{ieeetr}
\bibliography{Softmax_LiDAR.bib}

\end{document}